\numberwithin{equation}{section}
\newtheorem{theorem}[subsection]{Theorem}
\newtheorem{remark}[subsection]{Remark}
\newtheorem{lemma}[subsection]{Lemma}
\newtheorem{corollary}[subsection]{Corollary}
\newtheorem{definition}[subsection]{Definition}
\newcommand{\Z}{\mathbb Z}
\title{Computing $\varphi(N)$ for an RSA module with a single Quantum Query}
\author{Luis Víctor Dieulefait }
\address{Departament de Matem\`atiques i Inform\`atica,
Universitat de Barcelona, Barcelona, Spain. email: ldieulefait@ub.edu}
\author{Jorge Urroz}
\address{Departamento de Matemáticas e Informática aplicadas a la Ingeniería Civil y Naval, Universidad Politécnica de Madrid, Madrid, Spain.}
\thanks{Affiliations are included at the end of the paper. The first named author is supported by the grant PID2022-136944NB-I00 of the Ministerio de Ciencia, Innovaci\'{o}n y Universidades (Spain).}
\begin{document}

\begin{abstract}
    In this paper we give a polynomial time algorithm to compute $\varphi(N)$ for an RSA module $N$ using as input the order modulo $N$ of a randomly chosen integer.  This provides  a new insight in the very important problem of factoring an RSA module with extra information. In fact, the algorithm is extremely simple and  consists only on a computation of a greatest common divisor, two multiplications and a division. The algorithm works with a probability of at least 
    $1-\frac{1}{N^{1/2-\epsilon}}$, where $\epsilon$ is any small positive  constant. \\
    Keywords: factorization of integers, Shor's algorithm.
    
\end{abstract}

\maketitle

\section{Introduction}

In this paper we give a polynomial time algorithm to compute $\varphi(N)$ for an RSA module $N$ using as input the order modulo $N$ of a randomly chosen integer. This gives a new insight in the very important problem of factoring an RSA module with extra information. In fact, the algorithm is extremely simple and it consists only on a computation of a greatest common divisor, two multiplications and a division (see Theorems \ref{th:principal} and \ref{th:squareroot}). As it is well-known, from this value it is easy to factor $N$ by just solving a quadratic equation. The algorithm works with a probability greater than $1-\frac{1}{N^{1/2-\epsilon}}$, where $\epsilon$ is any small positive  constant, for $N$ large enough depending on $\epsilon$, and the input can be obtained from a quantum computer by a single application of Shor's algorithm (cf. \cite{shor1} and \cite{shor2}), assuming that the output of that algorithm is exactly the order module $N$ of a randomly chosen integer. This assumption is reasonable since, as explained in \cite{eke}, the probability that Shor's algorithm fails to give the right order is negligible, under suitable conditions on the parametrization and post-processing of the quantum algorithm (cf. \cite{eke}, section 2.1).  \\

\
There is a vast literature related to the problem of factoring via the knowledge of the order of an element. We recommend the reading of the paper \cite{eke} by M. Ekera, in which a detailed description of  different techniques and advances in the literature are shown, including modifications of the part of Shor's order finding algorithm (see \cite{sei}) or improving the lower bounds on the success probability (see \cite{knill}), or improvements in special cases like odd orders (see \cite{johnston,law,many}) or special integers (see \cite{eke2,ekehas, glms, xu}), among others. The paper \cite{eke} also includes a nice view on some previous  results about the extremely important factorization problem.

\

The main contribution of  \cite{eke} is  another polynomial time algorithm is given that factors an integer taking as input the order of a random integer, which is based on a variation of a method of V. Miller (the same method used by Shor to end his factorization algorithm, but this time with the extra clever idea of moving the base in a post-processing process in a classical computer). The algorithm in \cite{eke} can be used to factor arbitrary integers, not only RSA modules, but even for the case of RSA modules the probability of success is smaller than $1-\frac 1{C \log^2 N}$ (see \cite{eke}, Theorem 1 \footnote{we omit here the term $-1/{2^k}$, where $k$ denotes the number of iterations, since this term is dominated by the other, for $k$ sufficiently large}), thus the probability of failure is exponentially larger than the one we obtain. 
One could argue that  the algorithm in \cite{eke} is  sufficiently well-suited for practical purposes since the probability of failure is rather small, however, if we want to use the algorithm repeatedly, then since
$$ (1-\frac{1}{c(\log N)^2})^{(c\log N)^2} \approx e^{-1} $$
it gives a probability of getting at least one failure of about $1-\frac{1}{e}$ when repeating the algorithm a polynomial (quadratic) number of times. On the other hand,  our algorithm can be used an exponential number of times and still ensures a probability of failure exponentially small in the parameters. 

Also, a key advantage of the present algorithm is its extreme simplicity, very easy to implement. Since it is nothing but an application of Euclids' algorithm, two multiplications and one division, one could  get quasilinear complexity, see \cite{multi} and \cite{mo}. More precisely, the order of the algorithm is $O(\log N \cdot (\log \log N)^a)$ where the exponent $a$ is any real number larger than $2$. \\

\

The possibility of computing $\varphi(N)$ for an RSA module given the order of a random integer was also observed in \cite{glms} but only for the case of safe semiprimes (i.e., modules of the form $p \cdot q$ where both $p$ and $q$ are safe primes, which are by definition primes of the form $2 \cdot t + 1$ for another prime $t$), whereas our algorithm works  without this restriction.\\

\

To finalize this introduction we would like to  mention an additional  result that comes from our method. Concretely, in $1978$, Rivest, Shamir and Adleman  in \cite{rsa-factd} gave a probabilistic  algorithm proving that factoring $N$ is equivalent to knowing the secret key on the RSA cryptosystem. Namely knowing the public information $N$ and the public exponent $e$, if one can compute $d$, the inverse of $e$ modulo $\varphi(N)$, then one can factor $N$ in polynomial time with high probability. Then May and Coron in \cite{maycoron} improved the result showing a deterministic algorithm for the reduction of the knowledge of $d$ to factor $N$. For that, they use an improvement of Coppersmith algorithm (cf. \cite{cop}) valid not only for the integer of unknown factorization, but  any divisor of it. Here we show a straightforward proof of this equivalence, in the case that the exponent $e$ is not too large. Recall that, in practice, the most common exponent is the constant $3$ or $65537$. 

\subsection{The algorithm} The idea behind the proof of the main result of this paper, Theorem \ref{th:principal}, is very simple. Recall that the objective is to find $\varphi(N)$. Then, we also can factor $N$ by solving a system of two equations, as explained at the beginning  of in Section \ref{sec:factoring}.  The information that we have is $N$ and $x$, the order of an element in $(\Z/N \Z)^*$. Hence, $x$ is a divisor of the unknown $\varphi(N)$ and  there are two possibilities. If $x$ is big enough
we will find $\varphi(N)$ by noting that,  for primes $p,q$ of the same magnitude, the interval given by $N-C\sqrt N,N+C\sqrt N$, for some constant $C$,  contains exactly one multiply of $x$ which must be $\varphi(N)$, so we can find $\varphi(N)$ with elementary multiplications and divisions. On the other hand we prove that divisors of $\varphi(N)$ which are small appear with very low probability. As we will see a key feature to deduce this comes from our technique to enlarge divisors  proving that whenever $x$ is the order of an element in $(\Z/N\Z)^*$ then   $D=x(x, N-1)$ divides $\varphi(N)$.

\

In this way, Section \ref{sec:factoring} contains the technical results to take care of large divisors, while Section \ref{sec:order} deals with small divisors. For that we need to give an upper bound in the probability to find small orders and hence, we need a complete knowledge of the function counting how many elements are there having as order a concrete divisor. The section gives a fully explicit formula for that function and, once we have it, we can get our main theorem and other applications as a corollary.  Finally, we include Section \ref{sec:alg} containing a very simple description of the algorithm with  its complexity.

\

 To simplify the presentation, instead of working with an arbitrary RSA module, we will restrict to the case of a module $N$ such that the two prime factors $p$ and $q$ have the same number of bits, as we have mentioned above. This is the case for RSA modules in practice, in agreement with the Digital Signature Standards fixed by the National Institute of Standards and Technology (see \cite{dss}, Appendix A.1.1). 

\

We remark that our results can be easily adapted to cover also the case of arbitrary RSA modules. In fact, if we assume that both prime factors of $N$ are larger than $N^{1/4}$ then a result similar to Theorem \ref{th:principal} can be proved, just changing $1-\frac{1}{N^{1/2 - \epsilon}}$ by $1-\frac{1}{N^{1/4 - \epsilon}}$ in the probability of success. On the other hand, there is no need to consider the case where one of the prime factors of $N$ is smaller than $N^{1/4}$, since as it is well-known in that case the method of Coppersmith (see \cite{cop}) gives a polynomial time deterministic algorithm to factor $N$ (with a classical computer). \\

The following notation will  be used through the whole paper. Given two integers $a,b$, $(a,b)$ will be their greatest common divisor, while $[a,b]$ will be their least common multiple. It will be useful to recall that 
 \begin{equation}\label{eq:gl}
 ab=[a,b](a,b)
 \end{equation}
  Also, $N=pq$ will be the product of the two  prime numbers $p$ and $q$ such that  
\begin{eqnarray}\label{eq:factorpq}
\nonumber p-1&=&\prod_{i=1}^rp_i^{a_i}\\
q-1&=&\prod_{i=1}^rp_i^{b_i}
\end{eqnarray}
where a prime $l|(p-1,q-1)$ if and only if $l=p_i$ for some $1\le i\le r$. We will denote $m_i=\min\{a_i,b_i\}$ and $M_i=\max\{a_i,b_i\}$ for $i=1,\dots ,r$.

\section{Explicit formulas for $\varphi(N)$ and factoring}\label{sec:factoring}
It is well known  that an RSA modulus $N=pq$ can be factored knowing $\varphi(N)$, simply solving a system of two equations in the unknown $p,q$.  Namely $pq=N$ and $p+q=N-\varphi(N)+1$ give $p$ and $q$ the two solutions of the equation $x^2-Bx+N=0$, for $B=(N-\varphi(N)+1)$ or in other words
$$
q=\frac{B+\sqrt{B^2-4N}}{2},\qquad p=\frac{B-\sqrt{B^2-4N}}{2}.
$$
Since square rooting is as difficult as multiplication (cf. \cite{alt}), it follows from the multiplication algorithm in \cite{multi} that knowing $\varphi(N)$, we can find the factors  $p,q$ in $O(\log N  \cdot \log\log N )$ bit operations. For that reason our algorithms will simply analyze the cost of computing $\varphi(N)$. But, in fact, we now prove that it is enough to know either a big enough factor of $\varphi(N)$, or a small enough multiple of $\varphi(N)$.  

\

We start by showing that a large enough divisor $D|\varphi(N)$ have information enough to find $\varphi(N)$. The idea is simply to observe that $\varphi(N)$ and $N$ are in an interval of length $\sqrt N$, and in this interval  if $D$ is large enough, only one multiple of $D$ will fit, which must be $\varphi(N)$.

\begin{theorem}\label{th:squareroot} Let $N=pq$ be the product of two unknown prime numbers $p,q$ of at most $L$ bits, say $p<q\le2^{L}$.
Suppose we know $D$ such that $DX=(p-1)(q-1)$ and $D\ge 2^{L+1}$. Then, we can find $p,q$ in polynomial time. 
\end{theorem}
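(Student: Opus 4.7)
The plan is to exploit the identity $\varphi(N)=(p-1)(q-1)=N-(p+q)+1$, which lets us translate knowledge of a large factor $D$ of $\varphi(N)$ into knowledge of $p+q$ modulo $D$. Setting $s=p+q$, the hypothesis $DX=(p-1)(q-1)$ rewrites as
\begin{equation*}
DX = N+1-s, \qquad \text{i.e.,} \qquad s \equiv N+1 \pmod{D}.
\end{equation*}
So the first step is simply to consider the residue of $N+1$ modulo $D$.

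The key size estimate is that since $p<q\le 2^{L}$ we have $s=p+q<2q\le 2^{L+1}\le D$, so $s$ lies in the interval $[1,D-1]$. This means $s$ is uniquely determined as the canonical representative of $N+1$ modulo $D$ in $\{0,1,\ldots,D-1\}$: compute $s:=(N+1)\bmod D$, and this is exactly $p+q$.

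Once we know both $p+q=s$ and $pq=N$, the primes $p,q$ are the two roots of the quadratic $t^{2}-st+N=0$, so they are
\begin{equation*}
p,q \;=\; \frac{s\pm\sqrt{s^{2}-4N}}{2}.
\end{equation*}
The integer square root of $s^{2}-4N$ can be computed in polynomial time (e.g.\ by Newton iteration or binary search), after which $p$ and $q$ are recovered by one subtraction, one addition, and one division by $2$. A final check that $pq=N$ confirms correctness.

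There is no real obstacle in the argument; the only subtlety is making sure the size bound $s<D$ is strict, which uses $p<q$ (so $s<2q$) together with $q\le 2^{L}$ and $D\ge 2^{L+1}$. This is what guarantees that $s$ is pinned down by its residue class modulo $D$, rather than merely determined up to a multiple of $D$. Everything else (modular reduction, integer square root, and solving a quadratic with integer coefficients) is standard and polynomial in $L=O(\log N)$.
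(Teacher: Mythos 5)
Your proposal is correct and is essentially the paper's own argument: the paper computes $X=\left[\frac{N+1}{D}\right]$ using $0<\frac{p+q}{D}<1$ and then recovers $p+q=N+1-DX$, which is exactly your observation that $p+q=(N+1)\bmod D$ since $p+q<2^{L+1}\le D$. Both then finish by solving the same quadratic system in $p+q$ and $pq=N$.
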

{\bf Proof.} 
$$
DX=(p-1)(q-1)=N-(p+q)+1,
$$
so 
$$
\frac{N+1}{D}=X+\frac{p+q}{D},
$$
where 
$$
0<\frac{p+q}{D}<1,
$$
so 
$$
\left[\frac{N+1}{D}\right]=X,
$$
and once we know $X$, then we know $(p-1)(q-1)$ and we solve the system
\begin{eqnarray*}
&&N=pq\\
&&N+1-DX=p+q.
\end{eqnarray*}
It is important to emphasize that the previous theorem gives us a new formula for $\varphi(N)$ which we will highlight for independent interest.
\begin{corollary} \label{th:laphi} Let $N=pq$ be the product of two unknown prime numbers $p,q$ of at most $L$ bits, say $p<q\le2^{L}$.
Suppose we know $D$ such that $DX=(p-1)(q-1)$ and $D\ge 2^{L+1}$. Then, 
$$
\varphi(N)=\left[\frac{N+1}{D}\right]D.
$$
\end{corollary}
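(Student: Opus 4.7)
The plan is to observe that this corollary is essentially an immediate repackaging of Theorem \ref{th:squareroot}, so almost no new work is needed. In the proof of that theorem we already established the identity
$$
X = \left[\frac{N+1}{D}\right],
$$
by writing $DX = N - (p+q) + 1$ and bounding $(p+q)/D$ strictly between $0$ and $1$ using the hypothesis $D \ge 2^{L+1}$ together with $p,q \le 2^L$.

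Given that, the corollary follows in one line. First I would recall that since $N = pq$ is the product of two distinct primes, $\varphi(N) = (p-1)(q-1)$, which by hypothesis equals $DX$. Then I would substitute the explicit formula for $X$ from the theorem to conclude
$$
\varphi(N) = DX = D\left[\frac{N+1}{D}\right].
$$

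There is really no obstacle here: the only content is the inequality $0 < (p+q)/D < 1$, which was already verified inside the proof of Theorem \ref{th:squareroot}. The one thing worth flagging for the reader is the implicit use of the fact that $N$ has exactly two prime factors (so that $\varphi(N) = (p-1)(q-1)$ rather than a more complicated expression); this is built into the standing hypothesis $N = pq$ with $p,q$ prime. Thus the write-up will amount to a two-line deduction citing the theorem.
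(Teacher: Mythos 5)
Your proposal is correct and matches the paper's treatment: the paper gives no separate proof of the corollary, treating it as an immediate consequence of the identity $X=\left[\frac{N+1}{D}\right]$ established in the proof of Theorem \ref{th:squareroot}, exactly as you do by writing $\varphi(N)=(p-1)(q-1)=DX=D\left[\frac{N+1}{D}\right]$. No gaps.
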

We isolate now and interesting case  when the large divisor of $\varphi(N)$ is  $D=(p-1,q-1)$.
\begin{theorem}\label{th:gcd} Let $N=pq$ be the product of two unknown prime numbers $p,q$  such that $p+q<D^2$, where $D=(p-1,q-1)$ is known. Then we can factor $N$ in polynomial time.
\end{theorem}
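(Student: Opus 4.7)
The key idea is that knowing $D=(p-1,q-1)$ together with the size bound forces the cofactors $a=(p-1)/D$ and $b=(q-1)/D$ to be small enough that they can be read off from $N$. Write $p=Da+1$ and $q=Db+1$ with $\gcd(a,b)=1$. Since $p,q\le 2^L$ and $D>2^{L/2+1}$, we get
$$
a,b \;<\; \frac{2^L}{D} \;<\; 2^{L/2-1},
$$
so in particular $a+b<2^{L/2}<D$.

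Next I would expand
$$
N-1 \;=\; pq-1 \;=\; (p-1)(q-1)+(p-1)+(q-1) \;=\; D^2ab+D(a+b).
$$
Dividing by $D$ gives the integer $(N-1)/D = Dab + (a+b)$. Because $0<a+b<D$, this is precisely the division-with-remainder of $(N-1)/D$ by $D$: the quotient is $ab$ and the remainder is $a+b$. Hence a single Euclidean division recovers the integer $s:=a+b$.

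From $s$ one immediately has $p+q=D s+2$, and since $pq=N$ is known, $p$ and $q$ are the two roots of the quadratic $x^2-(Ds+2)x+N=0$, solvable in polynomial time. All operations performed (one division with remainder, one quadratic solve) are polynomial in $L$, so the algorithm runs in polynomial time.

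\textbf{Where the work lies.} There is essentially no difficulty beyond the size bookkeeping; the whole proof hinges on the inequality $a+b<D$, which is exactly why the hypothesis is phrased as $D>2^{L/2+1}$ (the factor $2$ accommodates the $a+b$ rather than just $\max(a,b)$). If one wanted a sharper threshold, one would have to argue more carefully about the relative sizes of $a$ and $b$, but for the stated bound the argument is immediate. Note that, unlike Theorem \ref{th:squareroot}, here $D$ is too small to apply the floor-formula $[(N+1)/D]=X$ directly, so one must extract $a+b$ modulo $D$ rather than from the integer part of $N/D$.
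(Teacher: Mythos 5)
Your proof is correct and follows essentially the same route as the paper: writing $p=Da+1$, $q=Db+1$, expanding $N-1=D^2ab+D(a+b)$, and recovering $a+b$ as the residue of $(N-1)/D$ modulo $D$ (the paper's $R_p,R_q$ are your $a,b$). Your explicit size bookkeeping $a+b<2^{L/2}<D$ just spells out what the paper summarizes as ``$R_q+R_p<D$ by our conditions.''
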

{\bf Proof.} We know that for some integers $R_p,R_q$ 
\begin{eqnarray*}
p&=&1+R_pD\\
q&=&1+R_qD,
\end{eqnarray*}
where $R_q+R_p=\frac{q-1}{D}+\frac{p-1}{D}<D$ by our conditions. Then, since
$$
N=1+(R_q+R_p)D+R_qR_pD^2
$$
we get 
$$
R_p+R_q=\frac{N-1}D\pmod D,
$$
in particular reducing $(N-1)/D$ modulo $D$ we get $R_p+R_q$ and, from there
 $p+q=2+(R_p+R_q)D$, and we factor $N$ by solving the system in $p+q$ and $pq$ as before.

\begin{remark} In the context of the paper one case of interest could be when $p,q$ are prime numbers of the same number of bits, say in the interval  $2^{L-1}<p<q<2^L$ and the gcd has half of them, namely $(p-1,q-1)>2^{(L+1)/2}$.
\end{remark}

Our next result is about  the complementary problem of knowing multiples of $\varphi(N)$. It has a direct implication with 
the mentioned  relation between factoring and finding the secret key on an RSA cryptosystem.

\begin{theorem} Let $N=pq$ be the product of two prime numbers  of  $L$ bits $2^{L-1}<p<q<2^{L}$, with  $(e,\varphi(N))=1$, $e<\frac{\sqrt2}3\sqrt N$ and suppose we know its inverse $d$ modulo $\varphi (N)$, namely $1\le d\le\varphi(N)$, such that $ed\equiv 1\pmod {\varphi(N)}$. Then, we can factor $N$  performing just one multiplication and two divisions.
\end{theorem}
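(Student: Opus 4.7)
The plan is to express $ed - 1 = k\,\varphi(N)$ for a small integer $k$, recover $k$ and hence $\varphi(N)$ using two divisions, and then factor $N$ by solving a quadratic in $p + q$.

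First, perform the single multiplication: $e \cdot d$, and set $M := ed - 1$. Since $ed \equiv 1 \pmod{\varphi(N)}$ and $1 \le d \le \varphi(N)$, the number $k := M/\varphi(N)$ is a positive integer with $k < e < \sqrt{N}/2$. The key observation is that $\varphi(N) = N - (p + q - 1)$ is very close to $N$: the hypotheses $p, q \in (2^{L-1}, 2^L)$ give $p + q < 2^{L+1}$ and $N > 2^{2L-2}$, whence
$$
0 \;<\; k - \frac{M}{N} \;=\; \frac{k(p+q-1)}{N} \;<\; \frac{\sqrt{N}}{2} \cdot \frac{2^{L+1}}{N} \;=\; \frac{2^{L}}{\sqrt{N}} \;<\; 2.
$$
Thus the first division, $M / N$, narrows $k$ down to at most two consecutive integers lying in the interval $(M/N,\, M/N + 2)$.

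For the second division, compute $M/k$ for each candidate: the correct $k$ is the one that yields an integer quotient $\varphi(N) = M/k$. From $\varphi(N)$ one recovers $p + q = N + 1 - \varphi(N)$, and $p, q$ are then the two roots of $x^{2} - (p+q)x + N = 0$.

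The main subtlety is the sharpness of the inequality above: the bound just barely leaves (at most) two integer candidates for $k$; a slightly stronger hypothesis such as $e < \sqrt{N}/4$ would force $k = \lceil M/N \rceil$ uniquely. As stated, the ambiguity is resolved for free by the integrality check on $M/k$ during the second division, so the announced cost of one multiplication plus two divisions is preserved.
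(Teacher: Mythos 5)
Your proposal follows essentially the same route as the paper's proof: set $M=ed-1=k\varphi(N)$, note $k<e<\frac{\sqrt N}{2}$ because $d\le\varphi(N)$, divide by $N$ so that $k-\frac{M}{N}=\frac{k(p+q-1)}{N}$, bound this error term to read $k$ off from $\frac{M}{N}$, recover $\varphi(N)=\frac{M}{k}$, and factor by solving $x^2-(p+q)x+N=0$ with $p+q=N+1-\varphi(N)$. The only substantive difference is quantitative. You bound $p+q$ crudely by $2^{L+1}$, which gives $k-\frac{M}{N}<2$ and therefore two candidate values of $k$; the paper instead uses both sides of the constraint $2^{L-1}<p<q<2^{L}$ together with $pq=N$ (maximizing $p+q$ under these conditions) to get $k(p+q-1)<N$, i.e. $0<k-\frac{M}{N}<1$, so that $k=\left[\frac{M}{N}\right]+1$ is determined outright; this is what produces the closed formula $\varphi(N)=M/\left(\left[\frac{M}{N}\right]+1\right)$ and the literal count of one multiplication and two divisions. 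Note that the two-sided constraint forces $q/p<2$, hence $p+q<\frac{3}{\sqrt 2}\sqrt N$, which already improves your estimate substantially.

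The genuine gap in your write-up is the disambiguation step. You claim the correct $k$ is ``the one that yields an integer quotient $M/k$,'' but nothing prevents the wrong candidate from also dividing $M=k\varphi(N)$ (the spurious candidate $k\pm 1$ may well be a divisor of $M$ by coincidence), so the integrality test alone does not certify $k$. To make the argument complete you must validate a candidate, e.g. by checking that $p+q:=N+1-\frac{M}{k}$ makes $x^2-(p+q)x+N$ split with integer roots (uniqueness of the factorization of $N$ then rules out the impostor), or by verifying $a^{M/k}\equiv 1 \pmod N$ for a random $a$. This still factors $N$ in polynomial time, so the reduction you want is intact, but in the worst case it costs an extra division plus a root-extraction, so the advertised ``one multiplication and two divisions'' is not literally preserved; your own remark that a slightly smaller cap on $e$ would restore uniqueness is the right instinct, and the paper's sharper treatment of $p+q$ is precisely what removes the ambiguity without changing the hypothesis.
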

 In fact, we prove a bit more. If we let $M=ed-1=k\varphi(N)$ we have the following explicit formula of independent interest
\begin{corollary} Let $N=pq$ be the product of two prime numbers of  $L$ bits $2^{L-1}<p<q<2^{L}$ $(e,\varphi(N))=1$, $e<\frac{\sqrt2}3\sqrt N$ and suppose we know its inverse $d$ modulo $\varphi (N)$, namely $1\le d\le\varphi(N)$. Then, denoting  $M=ed-1$ we have
$$
\varphi(N)=\frac{M}{\left[\frac{M}{N}\right]+1}.
$$
\end{corollary}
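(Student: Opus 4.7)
The plan is to unwind the definition of $d$. Since $(e,\varphi(N))=1$ and $d$ is the multiplicative inverse of $e$ modulo $\varphi(N)$ with $1\le d\le\varphi(N)$, I can write $M=ed-1=k\varphi(N)$ for some positive integer $k$. Once I identify $k$ with $[M/N]+1$, the corollary will follow at once from $\varphi(N)=M/k$, so the whole proof reduces to verifying the equality $[M/N]+1=k$.

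To do this I would substitute $\varphi(N)=N-(p+q-1)$ and rewrite
\begin{equation*}
\frac{M}{N}=\frac{k\varphi(N)}{N}=k-\frac{k(p+q-1)}{N}.
\end{equation*}
Then $[M/N]=k-1$ is equivalent to the two-sided bound $0<k(p+q-1)/N\le 1$. The positivity is immediate, so everything comes down to the upper estimate $k(p+q-1)\le N$.

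I would obtain the bound on $k$ from $ed=k\varphi(N)+1$ together with $d\le\varphi(N)-1$ (equality would force $\varphi(N)=1$): this gives $k\le e-1$, and the hypothesis $e<\sqrt{N}/2$ then yields $k<\sqrt{N}/2$. The bound on $p+q$ comes from the balanced-prime assumption $2^{L-1}<p<q<2^{L}$, which immediately gives $p+q<2^{L+1}$ while $\sqrt{N}>2^{L-1}$; sharpening with $p<\sqrt{N}$ and $q<2p$ yields $p+q<3\sqrt{N}$. Combining these two estimates and exploiting the strict bound $k\le e-1$ (rather than merely $k<e$) should deliver $k(p+q-1)\le N$.

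The main obstacle is precisely this final constant-chase: since $p+q\ge 2\sqrt{N}$ by AM--GM, the required inequality $k(p+q-1)\le N$ is tight under $e<\sqrt{N}/2$, and the inequalities must be tracked with care. Once it is secured, the equality $[M/N]+1=k$ follows, and the claimed formula $\varphi(N)=M/([M/N]+1)$ is immediate.
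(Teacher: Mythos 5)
Your overall route is the same as the paper's: write $M=ed-1=k\varphi(N)$, reduce the corollary to the identity $\left[\frac{M}{N}\right]=k-1$, i.e.\ to the two-sided bound $0<\frac{k(p+q-1)}{N}\le 1$, and get $k<\frac{\sqrt N}{2}$ from $d\le\varphi(N)-1$ and $e<\frac{\sqrt N}{2}$. But the proof is not complete, because the one step that carries all the weight --- the upper estimate $k(p+q-1)\le N$ --- is never actually established; you defer it to a ``constant-chase'' and assert that your two bounds ``should deliver'' it. They provably do not: from $k<\frac{\sqrt N}{2}$ and $p+q<3\sqrt N$ you only get $k(p+q-1)<\frac{3}{2}N$, and sharpening $k<e$ to $k\le e-1$ buys only an additive saving of order $\sqrt N$, which cannot close a multiplicative gap of $3/2$. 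Your own remark that $p+q>2\sqrt N$ by AM--GM shows exactly why: any argument that uses only $k<\frac{\sqrt N}{2}$ together with a loose bound $p+q=O(\sqrt N)$ is doomed, since the required inequality forces $p+q-1\le N/k$, which sits right at the threshold $2\sqrt N$.

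What the paper does at this point is invoke the full two-sided hypothesis $2^{L-1}<p<q<2^{L}$ together with $pq=N$, bounding $p+q$ by maximizing it under these constraints, and concludes $k(p+q-1)<k(p+q)<N$; that maximization is precisely the ingredient missing from your sketch, and your substitute bound $p+q<3\sqrt N$ (from $p<\sqrt N$, $q<2p$) is too weak to replace it. Note also that this step is delicate even in the paper's own formulation (the maximum of $p+q$ under $pq=N$, $2^{L-1}<p<q<2^{L}$ can exceed $2\sqrt N$, e.g.\ it approaches $\tfrac{3}{\sqrt 2}\sqrt N$ when $N$ is near $2^{2L-1}$), so your instinct that the inequality is ``tight and must be tracked with care'' is correct --- but identifying an obstacle is not the same as overcoming it, and as written the central inequality, hence the identification $k=\left[\frac{M}{N}\right]+1$, remains unproven.
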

\begin{proof} (Of the theorem and the corollary) We know that  
$$
M=ed-1=k\varphi(N)=k(N-(p+q)+1),
$$
and dividing by $N$ we get
$$
\frac{M}{N}=k-\frac{k(p+q-1)}{N}.
$$
but since $M<ed<c\sqrt {N}\varphi(N)$  for $c=\frac{\sqrt2}3$, we have $k<c\sqrt N$. On the other hand, since  the function $f(x)=x+\frac1x$  is increasing for $x>1$, we have $f(\sqrt{q/p})<f(\sqrt 2)=1/c$, for any $q/p<2$ as in the hypothesis. From there we get   $p+q<\sqrt{pq}/c$, and then $k(p+q-1)<k(p+q)<N$. Hence
$$
0<\frac{k(p+q-1)}{N}<1,
$$
which gives $k=\left[\frac{M}{N}\right]+1$, and hence the result in the corollary. Once we know $\varphi(N)$ we can factor $N$ as usual by solving the system in $p+q$ and $pq$.
\end{proof}

\section{Factoring and the order of elements modulo $N$.}\label{sec:order}

The results in the previous section are directly related with the problem of factoring $N$ knowing the order of an element in $(\Z/N\Z)^*$, an information that can be obtained from  Shor's algorithm. Concerning this problem, in \cite{eke} it is proved that knowing the order of a randomly chosen element modulo $N$ one can factor $N$ with probability smaller than $1-\frac 1{C \log^2 N}$ by applying a procedure that is a variation of the method of Miller proposed by Shor himself in this context, but with the novel feature that the base is allowed to vary while a smooth factor is added to the order. On the other hand, in \cite{glms} the authors show that $\varphi(N)$ can be computed for an RSA module from the order of an element modulo $N$, with high probability, but only for the case of safe semiprimes. In the remaining of this section, we improve dramatically the previous results and prove that the knowledge of the order of an element in $(\Z/N\Z)^*$ gives the factorization of $N$ with probability $1-\frac{1}{N^{1/2-\epsilon}}$ for arbitrary $\epsilon>0$ and large enough $N$,  assuming that the prime factors of the RSA modulus $N$ have the same length (also, as remarked in the introduction, a variant of this algorithm also works for arbitrary RSA modules). We shall do this by giving a short procedure to obtain the value of $\varphi(N)$ given the order of a random element. \\

Recall that we will have an oracle (in particular the output of Shor's algorithm) that in principle will give a number which is the order of a random  element in $(\Z/N\Z)^*$. Hence to bound the probability,  given certain order, a divisor of $\varphi(N)$, we first need to control how many elements have this as its order. For that purpose we start by proving the following technical lemmata. The results are number theoretic analysis of the group $(Z/N\Z)^*$. First we will prove that the function $N(x)=\#\{a\in(\Z/N\Z)^*:\text{ord}_N(a)=x\}$ is multiplicative. This, will allow us to restrict our attention to prime powers, to find an explicit formula for the function $N(x)$ for any possible order $x$. The explicit formula is the key to give our bounds on the probability.  This will be the content of Theorem \ref{th:principal}.

\begin{lemma}\label{lem:ordmultgen}  Let $N$ be an integer, and consider the function  defined as  $N(x)=\#\{a\in(\Z/N\Z)^*:\text{ord}_N(a)=x\}$. Then 
$N(x)$ is a multiplicative function.
\end{lemma}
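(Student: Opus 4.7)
The plan is to prove that whenever $\gcd(x_1,x_2)=1$ one has $N(x_1x_2)=N(x_1)N(x_2)$, by exhibiting an explicit bijection between the relevant sets of elements. Concretely, I would consider the map
\[
\Phi:\{b\in(\Z/N\Z)^*:\text{ord}_N(b)=x_1\}\times\{c\in(\Z/N\Z)^*:\text{ord}_N(c)=x_2\}\longrightarrow\{a\in(\Z/N\Z)^*:\text{ord}_N(a)=x_1x_2\}
\]
defined by $\Phi(b,c)=bc$. The first step is to check that $\Phi$ lands where it should: since $(\Z/N\Z)^*$ is abelian, the identity $(bc)^k=b^k c^k$ shows that $\text{ord}_N(bc)$ divides $x_1x_2$, and conversely $(bc)^{x_1x_2/d}=1$ with $d=\text{ord}_N(bc)$ forces both $x_1\mid(x_1x_2/d)$ and $x_2\mid(x_1x_2/d)$ after raising to appropriate powers, so $d=x_1x_2$. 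This is the standard lemma on orders in abelian groups and uses $\gcd(x_1,x_2)=1$ in an essential way.

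Next I would verify injectivity of $\Phi$: if $bc=b'c'$ then $b(b')^{-1}=c'c^{-1}$, where the left-hand side has order dividing $x_1$ and the right-hand side has order dividing $x_2$. Hence both sides have order dividing $\gcd(x_1,x_2)=1$, so $b=b'$ and $c=c'$. For surjectivity I would invoke Bezout: given $a$ with $\text{ord}_N(a)=x_1x_2$, choose $u,v\in\Z$ with $ux_1+vx_2=1$, and set $b:=a^{vx_2}$ and $c:=a^{ux_1}$. Then $bc=a^{vx_2+ux_1}=a$, and using the formula $\text{ord}(a^k)=\text{ord}(a)/\gcd(\text{ord}(a),k)$ together with $\gcd(v,x_1)=\gcd(u,x_2)=1$ (both coprimalities follow immediately from $ux_1+vx_2=1$), one computes $\text{ord}_N(b)=x_1$ and $\text{ord}_N(c)=x_2$, so $(b,c)$ lies in the domain of $\Phi$.

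I do not foresee a serious obstacle here: everything is essentially the abelian-group fact that the subgroup of elements of order dividing $x_1x_2$ decomposes as an internal direct product of its $x_1$-torsion and $x_2$-torsion parts when $\gcd(x_1,x_2)=1$. The only point requiring a tiny bit of care is the order computation in the surjectivity step, to make sure the orders produced by the Bezout construction are exactly $x_1$ and $x_2$ and not merely divisors; this is handled cleanly by the order formula above combined with the coprimality consequences of $ux_1+vx_2=1$.
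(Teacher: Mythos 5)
Your proposal is correct and takes essentially the same route as the paper, which likewise shows that the products $a_ib_j$ of elements of orders $x_1$ and $x_2$ are pairwise distinct, all have order $x_1x_2$, and exhaust the elements of order $x_1x_2$ via the Bezout decomposition $w=w^{ux_1}w^{vx_2}$. One small slip in wording: in the well-definedness step the assertion $(bc)^{x_1x_2/d}=1$ (with $d=\text{ord}_N(bc)$) is not the right intermediate claim; the standard argument instead raises $(bc)^{d}=1$ to the powers $x_1$ and $x_2$ to kill $b$ resp.\ $c$, giving $x_2\mid d$ and $x_1\mid d$ and hence $d=x_1x_2$ by coprimality.
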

\begin{proof} Suppose $(x_1,x_2)=1$ are two relatively prime integers so that $N(x_1)=n_1$, and $N(x_2)=n_2$. We will prove that if    $C_{x_1}=\{a_1,\dots,a_{n_1}\}$ are the elements with order $x_1$ and $C_{x_2}=\{b_1,\dots,b_{n_2}\}$  are the elements with order $x_2$, then $C_{x_1,x_2}=\{a_ib_j:1\le i\le n_1,1\le j\le n_2\}$ contains the elements with  order $x_1x_2$.

\

Suppose ord$_N(a)=x$ and let $(x,y)=1$. Then ord$_N(a^y)=x$. Indeed suppose ord$_N(a^y)=z$. Since $(a^y)^x=(a^x)^y=1\pmod N$, $z|x$. But we know that  for some integers $u,v$ we have $a^z=(a^z)^{uy+vx}=(a^y)^{zu}(a^x)^{zv}=1$, so $x|z$ and they are the same.

\

Now, we see that every element in $C_{x_1,x_2}$ has order $x_1x_2$. First note that $(a_ib_j)^{x_1x_2}=1\pmod N$, so ord$_N(a_ib_j)|x_1x_2$ Suppose ord$_N(a_ib_j)=z$. Then $z=z_1z_2$ where $z_1=(z,x_1)$, and  $z_2=(z,x_2)$, but then 
$(a_ib_j)^{z_1z_2}=1\pmod N$ so $(a_i^{z_2})^{z_1}=((b_j)^{-1})^{z_1})^{z_2}$, but ord$_N(a_i^{z_2})=x_1$ and ord$_N(b_j)^{-1})^{z_1}=x_2$, so they can not be equal unless $z_1=x_1$ and $z_2=x_2$.
On the other hand $a_ib_j\ne a_Ib_J$ for any $(i,j)\ne (I,J)$, again for the same reason since otherwise $\frac{a_i}{a_I}=\frac{b_J}{b_j}$ which is impossible since ord$_N(\frac{a_i}{a_I})|x_1$ and ord$_N(\frac{b_J}{b_j})|x_2$. 

\

So we have proved that the $n_1n_2$ elements in $C_{x_1,x_2}$ are indeed different and have order $x_1x_2$. Now we need to prove that there are no more. Suppose $w\in(\Z/N\Z)^*$ has ord$_N(w)=x_1x_2$. We will prove that $w$ is the product of two elements $w=a_ib_j$ for $a_i\in C_{x_1}$ and $b_j\in C_{x_2}$. Suppose $u,v$ are integers such that $ux_1+vx_2=1$. Then  we have 
$$
w=w^{ux_1+vx_2}=w^{ux_1}w^{vx_2}
$$
But if ord$_N(w^{ux_1})=z$ then $z|x_2$ and $(u,z)=1$ so ord$_N(w^{x_1})=z$ and hence $z=x_2$ in the same way we prove that  ord$_N(w^{vx_2})=x_1$ finishing the proof of the lemma.
\end{proof}

\begin{lemma}\label{lem:ord} Let $N=pq$ the product of two primes, $x|[p-1,q-1]$ and  let $N(x)=\#\{a\in(\Z/N\Z)^*:\text{ord}_N(a)=x\}$.  Then
\begin{equation}\label{eq:orderspq}
N(x)=\varphi(x)(x,N-1)\sum_{\substack{d|(x,N-1)\\ \left(d,\frac{x}{(x,N-1)}\right)=1}}\frac{\mu^2(d)}{d}
\end{equation}
\end{lemma}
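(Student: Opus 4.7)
The plan is to reduce to prime powers by multiplicativity and then verify the identity by direct counting. The multiplicativity of the left-hand side $N(x)$ is exactly Lemma \ref{lem:ordmultgen}. On the right-hand side, $\varphi(x)$ and $(x,N-1)$ are standard multiplicative functions of $x$, and for $x=\prod_j l_j^{k_j}$ the sum
\[
S(x) \,:=\, \sum_{\substack{d\mid(x,N-1)\\(d,\,x/(x,N-1))=1}}\frac{\mu^2(d)}{d}
\]
is supported on squarefree $d$ and factors primewise as $\prod_{l_j\mid x}(1+\epsilon_j/l_j)$, with $\epsilon_j=1$ if $k_j\leq v_{l_j}(N-1)$ and $\epsilon_j=0$ otherwise. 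Hence both sides are multiplicative, and it suffices to prove the identity for $x=l^k$ with $l^k\mid[p-1,q-1]$.

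For the direct computation of $N(l^k)$ I will use the identity $\sum_{d\mid y}N(d)=\#\{a\in(\Z/N\Z)^*:a^y=1\}=(y,p-1)(y,q-1)$, the last equality following from CRT and the cyclicity of $(\Z/p\Z)^*$ and $(\Z/q\Z)^*$. M\"obius inversion at $y=l^k$ collapses this to
\[
N(l^k)=(l^k,p-1)(l^k,q-1)-(l^{k-1},p-1)(l^{k-1},q-1).
\]
Writing $a=v_l(p-1)$, $b=v_l(q-1)$ and assuming $a\leq b$ (so $1\leq k\leq b$ by the hypothesis $l^k\mid[p-1,q-1]$), one gets $N(l^k)=l^{2k-2}(l^2-1)$ when $k\leq a$ and $N(l^k)=\varphi(l^k)\cdot l^a$ when $a<k\leq b$. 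The latter range includes the borderline case $a=0$, i.e., $l$ dividing only one of $p-1$ and $q-1$.

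The last step is to compute $c:=v_l(N-1)$ and match the formula. Expanding $N-1=(p-1)(q-1)+(p-1)+(q-1)$ one reads off $c\geq a$, with equality whenever $a\neq b$; when $a=b$ one has $k\leq b=a\leq c$ regardless, so the actual value of $c$ is immaterial. Hence the range $k\leq a$ coincides with $k\leq c$, in which $(l^k,N-1)=l^k$, the quotient $x/(x,N-1)$ equals $1$, and $S(l^k)=1+1/l$, so the right-hand side evaluates to $\varphi(l^k)\cdot l^k\cdot(1+1/l)=l^{2k-2}(l^2-1)$. In the range $a<k\leq b$ one necessarily has $a\neq b$, so $c=a<k$, $(l^k,N-1)=l^a$, the quotient $l^{k-a}$ is a positive power of $l$, and only $d=1$ satisfies the coprimality constraint, giving $S(l^k)=1$ and the right-hand side $\varphi(l^k)\cdot l^a$. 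Both cases match the direct counts.

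I expect the main pitfall to be the bookkeeping of $v_l(N-1)$: when $a=b$ there can be cancellation in $(p-1)+(q-1)$ making $c$ strictly larger than $a$, but the constraint $k\leq\max(a,b)$ guarantees that only the inequality $k\leq c$ is ever tested and it holds trivially in this symmetric case. Once this subtlety is noted, the remaining verification is a routine prime-by-prime comparison.
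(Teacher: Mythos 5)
Your proof is correct, and it shares the paper's overall skeleton---reduce to prime powers via Lemma \ref{lem:ordmultgen}, then match the two sides according to whether $k\le\min(a,b)$ or $\min(a,b)<k\le\max(a,b)$---but the way you handle the local count and the factor $(x,N-1)$ is genuinely different. The paper computes $N(l^k)$ by enumerating the pairs of local orders $(e_p,e_q)$ with $[e_p,e_q]=l^k$ and summing $\varphi(e_p)\varphi(e_q)$, with an inclusion-exclusion correction in the case where both exponents can equal $k$, and it silently invokes the identity $(x,N-1)=(x,p-1,q-1)$, which is only stated and proved in the lemma that follows this one. You instead obtain $N(l^k)$ by telescoping the kernel counts $\#\{a\in(\Z/N\Z)^*:a^{y}=1\}=(y,p-1)(y,q-1)$, i.e. $N(l^k)=(l^k,p-1)(l^k,q-1)-(l^{k-1},p-1)(l^{k-1},q-1)$, which avoids the double-counting bookkeeping, and you replace the gcd lemma by a direct computation of the $l$-adic valuation $v_l(N-1)$, correctly flagging the one subtlety: possible cancellation in $(p-1)+(q-1)$ when $a=b$, which is harmless since then $k\le a\le v_l(N-1)$ automatically. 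The paper's route has the advantage that its auxiliary lemma $(D,N-1)=(D,p-1,q-1)$ is reused later (Corollaries \ref{cor:biggerd} and \ref{cor:pq}); your route is more self-contained at this point, and the M\"obius/telescoping computation is arguably cleaner at borderline cases such as $a=0$.
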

\begin{proof} We will denote $x$ as $x=\prod_{1\le i\le r}p_i^{\alpha_i}P_xQ_x$ where $P_x|P$ and $Q_x|Q$, with the notation in  (\ref{eq:factorpq}). Since $N(x)$ is multiplicative,  we just need to compute $N(l^a)$ for $l$ a prime number. Now if $l\nmid (p-1,q-1)$ and $l|p-1$, then if $e_p|(p-1)$ and $e_q|q-1$ are such that $[e_p,e_q]=l^a$, then $e_p=l^a$, $e_q=1$ and we have $N(l^a)=\varphi(l^a)$. If $l=p_i$ for some $1\le i \le r$, then  we have to distinguish two cases. 

\medskip

\underline{Case 1.} Suppose $\alpha_i\le m_i$. Then, if $e_p|(p-1)$ and $e_q|q-1$ are such that $[e_p,e_q]=p_i^{\alpha_i}$, then either 
$e_p=p_i^{\alpha_i}$ and $e_q=p_i^{c_i}$ for some $0\le c_i\le\alpha_i$ or $e_q=p_i^{\alpha_i}$  and $e_p=p_i^{c_i}$ for some $0\le c_i\le\alpha_i$. And, since we are counting $p_i^{\alpha_i}$ twice, we get in this case 
\begin{eqnarray*}
&&2\sum_{0\le c_i\le \alpha_i}\varphi(p_i^{\alpha_i})\varphi(p_i^{c_i})-\varphi(p_i^{\alpha_i})^2=2p_i^{\alpha_i}\varphi(p_i^{\alpha_i})-\varphi(p_i^{\alpha_i})^2\\
&&=\varphi(p_i^{\alpha_i})(2p_i^{\alpha_i}-p_i^{\alpha_i}+p_i^{\alpha_i-1})=\varphi(p_i^{\alpha_i})p_i^{\alpha_i}\left(1+\frac1{p_i}\right)\end{eqnarray*}
\medskip

\underline{Case 2.} Suppose $m_i<\alpha_i$ and let $a_i=M_i$.  Then, if $e_p|(p-1)$ and $e_q|q-1$ are such that $[e_p,e_q]=p_i^{\alpha_i}$, then $e_p=p_i^{\alpha_i}$ while $e_q=p_i^{c_i}$ for some $0\le c_i\le m_i$, so in this case
$$
N(p_i^{\alpha_i})=\varphi(p_i^{\alpha_i})\sum_{0\le c_i\le m_i}\varphi(p_i^{c_i})=\varphi(p_i^{\alpha_i})p_i^{m_i}
$$
Putting all together  we get 
\begin{eqnarray*}
N(x)&=&\varphi(P_xQ_x)\prod_{\substack{\alpha_i\le m_i}}\varphi(p_i^{\alpha_i})p_i^{\alpha_i}\left(1+\frac1{p_i}\right)
\prod_{\substack{\alpha_i>m_i}}\varphi(p_i^{\alpha_i})p_i^{m_i}\\
&=&\varphi(P_xQ_x)(x,N-1)\varphi(\frac{x}{P_xQ_x})\prod_{\alpha_i\le m_i}\left(1+\frac1{p_i}\right)\\
&=&\varphi(x)(x,N-1)\sum_{\substack{d|(x,N-1)\\ \left(d,\frac{x}{(x,N-1)}\right)=1}}\frac{\mu^2(d)}{d}.
\end{eqnarray*}
\end{proof}
\begin{lemma} Let $N=pq$  be the product of two prime numbers $p$ and $q$ and let $D|[p-1,q-1]$. Then, $(D,p-1,q-1)=(D,N-1)$.
\end{lemma}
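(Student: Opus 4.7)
The plan is to use the associativity of gcd to rewrite the left-hand side, and then analyze the two divisibility directions using the prime-by-prime structure of the two sides.

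First I would observe that, by the associativity of gcd, $(D, p-1, q-1) = (D, (p-1,q-1))$. Setting $g=(p-1,q-1)$, the statement becomes $(D,g)=(D,N-1)$. For the easy direction, note that $p\equiv q\equiv 1 \pmod g$, hence $N=pq\equiv 1 \pmod g$, so $g\mid N-1$. This immediately gives $(D,g)\mid N-1$, and trivially $(D,g)\mid D$, so $(D,g)\mid (D,N-1)$.

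The harder direction is showing $(D,N-1)\mid (D,g)$, i.e., $(D,N-1)\mid g$. I would do this prime-power by prime-power: let $l^a$ be any prime power dividing $e:=(D,N-1)$. Since $l^a\mid D\mid [p-1,q-1]$, the $l$-adic valuation of $l^a$ is at most $\max(v_l(p-1),v_l(q-1))$, so without loss of generality $l^a\mid p-1$. Then $p\equiv 1\pmod{l^a}$, so $N-1=pq-1\equiv q-1\pmod{l^a}$; combined with $l^a\mid N-1$, this forces $l^a\mid q-1$. Hence $l^a\mid g$. Doing this for every prime power dividing $e$ yields $e\mid g$, and combined with $e\mid D$ this gives $e\mid (D,g)$.

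The main subtlety — and really the only one — is correctly handling prime-power divisibility rather than just primes, since $l\mid [p-1,q-1]$ only tells you $l$ divides one of the factors; one has to use the full hypothesis $l^a\mid [p-1,q-1]$ to conclude $l^a$ divides the larger of $p-1$ and $q-1$ at $l$. Once that is observed, the lifting step via $pq\equiv 1\pmod{l^a}$ is routine because $p$ is a unit modulo $l^a$, so the reverse inclusion follows and equality is established.
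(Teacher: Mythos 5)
Your proof is correct and follows essentially the same route as the paper: the key point in both is that a prime power dividing $[p-1,q-1]$ must divide $p-1$ or $q-1$, after which the identity $N-1=(p-1)q+(q-1)$ (equivalently your congruence $pq-1\equiv q-1 \pmod{l^a}$ when $p\equiv 1$) transfers divisibility to the other factor, giving $(D,N-1)\mid(p-1,q-1)$; the reverse divisibility is the same easy observation in both. Your explicit prime-power bookkeeping and WLOG are fine, so there is no gap.
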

\begin{proof} First we note that for any prime $l$ so that $l^a|[p-1,q-1]$, then either $l^a|p-1$ or $l^a|q-1$ and hence, noting that
$N-1=(p-1)q+(q-1)$,   if $l^a|([p-1,q-1],N-1)$, then $l^a|(p-1,q-1)$ and hence we deduce  $(D,N-1)|(D,p-1,q-1)$. On the other hand $(D,p-1,q-1)|( D,N-1)$, so they must be equal.
\end{proof}
\begin{corollary}\label{cor:biggerd} If $D|[p-1,q-1]$ then $D(D,N-1)|\varphi(N)$.
\end{corollary}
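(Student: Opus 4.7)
The plan is to reduce the divisibility $D(D,N-1)\mid \varphi(N)$ to a prime-by-prime analysis of $l$-adic valuations, since every quantity in sight is a product of integers built out of $p-1$ and $q-1$. First I would fix a prime $l$ and set $a=v_l(p-1)$, $b=v_l(q-1)$, $d=v_l(D)$, and assume without loss of generality that $a\le b$. Then $v_l(\varphi(N))=v_l((p-1)(q-1))=a+b$ and $v_l([p-1,q-1])=b$, so the hypothesis $D\mid [p-1,q-1]$ translates into the pointwise inequality $d\le b$.

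Next, I would invoke the preceding lemma to replace $(D,N-1)$ by $(D,p-1,q-1)$, whose $l$-adic valuation is $\min(d,a,b)=\min(d,a)$. With this substitution, the desired divisibility $D(D,N-1)\mid\varphi(N)$ is equivalent to the numerical inequality
$$
d+\min(d,a)\ \le\ a+b
$$
holding at every prime $l$.

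Finally, I would split into two cases according to the size of $d$. If $d\le a$, then $d+\min(d,a)=2d\le 2a\le a+b$. If instead $a<d$, then $\min(d,a)=a$ and the bound $d\le b$ coming from the hypothesis gives $d+\min(d,a)=d+a\le b+a$. In both cases the inequality holds, which establishes the corollary. The whole argument is mechanical once the preceding lemma is used to rewrite $(D,N-1)$ as $(D,p-1,q-1)$; I do not foresee any essential obstacle beyond keeping track of the two cases in the valuation comparison.
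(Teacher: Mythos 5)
Your proof is correct and follows essentially the same route as the paper: both invoke the preceding lemma to replace $(D,N-1)$ by $(D,p-1,q-1)$ and then check that $D\cdot(D,p-1,q-1)$ divides $(p-1)(q-1)=\varphi(N)$. The only difference is presentational: the paper closes in one line by noting $D\mid[p-1,q-1]$, $(D,p-1,q-1)\mid(p-1,q-1)$ and $[p-1,q-1]\,(p-1,q-1)=(p-1)(q-1)$, whereas you verify the same inequality prime by prime via $l$-adic valuations.
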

\begin{proof} Using the previous Lemma we see that  
$$
D(D,N-1)=D(D,(p-1,q-1))
$$
and since
$D|[p-1,q-1]$ and $(D,(p-1,q-1))|(p-1,q-1)$ we get 
$$
D(D,N-1)|[p-1,q-1](p-1,q-1)=
(p-1)(q-1)=\varphi(N).
$$
\end{proof}
\begin{corollary}\label{cor:pq} Let $p-1=\prod_{i=1}^rp_i^{a_i}P$, $q-1=\prod_{i=1}^rp_i^{b_i}Q$ where a prime $l|(p-1,q-1)$ if and only if $l=p_i$ for some $1\le i\le r$.  Suppose $D|[p-1,q-1]$ and  consider the sequence $D_1=D$, $D_{j+1}=\frac{D_j}{(D_j,N-1)}$. Then for some $j_0$ we have 
$D_j=D_{j_0}$ for all $j\ge j_0$ and $D_{j_0}|PQ$.
\end{corollary}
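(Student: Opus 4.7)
The plan is to track the exponents of the primes $p_i$ in $D_j$ and show they monotonically decrease to zero, after which the iteration stabilizes. The key input is the previous lemma, which lets us replace $(D_j,N-1)$ by $(D_j,p-1,q-1)$, reducing the computation to the $p_i$-part of $D_j$.

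First I would exploit the hypothesis that the $p_i$ are exactly the common prime divisors of $p-1$ and $q-1$ to get
$$[p-1,q-1]=PQ\prod_{i=1}^{r}p_i^{M_i},\qquad (p-1,q-1)=\prod_{i=1}^{r}p_i^{m_i},$$
with $(P,Q)=1$ and both $P,Q$ coprime to each $p_i$. (Indeed, if a prime divided both $P$ and $q-1$, it would lie in $(p-1,q-1)$, contradicting the coprimality of $P$ with every $p_i$.) Since $D\mid[p-1,q-1]$, I can write uniquely
$$D=P'Q'\prod_{i=1}^{r}p_i^{\alpha_i},\quad P'\mid P,\ Q'\mid Q,\ 0\le\alpha_i\le M_i.$$

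Next I would prove by induction on $j$ that $D_j=P'Q'\prod p_i^{\alpha_i^{(j)}}$ with $\alpha_i^{(1)}=\alpha_i$ and
$$\alpha_i^{(j+1)}=\alpha_i^{(j)}-\min(\alpha_i^{(j)},m_i).$$
The inductive step uses the previous lemma applied to $D_j\mid [p-1,q-1]$: since $P'Q'$ is coprime to $(p-1,q-1)=\prod p_i^{m_i}$, I get
$$(D_j,N-1)=(D_j,p-1,q-1)=\Bigl(D_j,\prod p_i^{m_i}\Bigr)=\prod p_i^{\min(\alpha_i^{(j)},m_i)},$$
and dividing yields the recurrence, while the $P'Q'$ factor is untouched.

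The recurrence shows $\alpha_i^{(j+1)}=0$ whenever $\alpha_i^{(j)}\le m_i$ and $\alpha_i^{(j+1)}=\alpha_i^{(j)}-m_i$ otherwise; since $m_i\ge 1$, every exponent reaches zero in at most $\lceil\alpha_i/m_i\rceil$ steps. Let $j_0$ be the smallest index with all $\alpha_i^{(j_0)}=0$. Then $D_{j_0}=P'Q'\mid PQ$. To conclude $D_j=D_{j_0}$ for $j\ge j_0$, I observe that $(D_{j_0},N-1)=(P'Q',p-1,q-1)=1$, because $P'Q'$ is coprime to $\prod p_i^{m_i}=(p-1,q-1)$. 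The only point that requires any care is the clean separation of the coprime part $P'Q'$ from the $p_i$-part at every iteration, but this follows immediately from the fact that $(P,Q)=1$ and neither involves any $p_i$; the rest is bookkeeping.
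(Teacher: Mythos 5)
Your proof is correct and follows essentially the same route as the paper's: you invoke the previous lemma to identify $(D_j,N-1)$ with the $p_i$-part $\prod p_i^{\min(\alpha_i^{(j)},m_i)}$, track the exponents of each $p_i$ decreasing to zero while the part of $D$ prime to the $p_i$ is untouched, and conclude stabilization at a divisor of $PQ$. The paper's version is just a terser statement of the same exponent bookkeeping, so there is nothing to add.
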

\begin{proof} Let $m_i=\min\{a_i,b_i\}$ and suppose $p_i^{\alpha_i}||D_j$ for $1\le i\le r$. Then, if $\alpha_i\le m_i$ we have
$p_i^{\alpha_i}||(D_j,N-1)$ and $p_i\nmid D_{j+1}$. If $\alpha_i>m_i$, then $p_i^{m_i}||(D_j,N-1)$ and $p_i^{\alpha_i-m_i}||D_{j+1}$ and iterating we get the result.
\end{proof}
\begin{remark} Observe that if $D=\prod_{i=1}^rp_i^{\alpha_i}P_DQ_D=C_DP_DQ_D$, where $P_D|P$ and $Q_D|Q$, then 
$D_{j_0}=P_DQ_D$ and $\frac{D}{D_{j_0}}=\prod_{i=1}^rp_i^{\alpha_i}$.
\end{remark}

One can use the previous remark to give a result that allows factorization directly in terms of $(p-1,q-1)$, improving Theorem \ref{th:gcd} in some cases.
\begin{theorem} Let $N=pq$ where $p$ and $q$ and $D=C_DP_DQ_D=C_DD_{j_0}$ are as in Corollary \ref{cor:pq} and denote $F=(C_D,N-1)$. Then, if 
$$
D_{j_0}>\frac{(p-1+q-1)}{F^2},
$$
we can factor $N$ in polynomial time.
\end{theorem}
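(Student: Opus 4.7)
The plan is to distill, from $D$, an explicit divisor $D' := D_{j_0}F^2$ of $\varphi(N)$ that is strictly larger than $p+q-2$, and then to recover $X := \varphi(N)/D'$ from $\lfloor (N-1)/D' \rfloor$. With $\varphi(N)$ in hand the factorisation follows at once by the usual quadratic, as in Theorem \ref{th:squareroot}.

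The first step is to verify that $D_{j_0}F^2$ divides $\varphi(N)$. With the notation of the remark just before the theorem, $D_{j_0}=P_DQ_D$ contains no prime among the $p_i$ and satisfies $P_D\mid P\mid p-1$, $Q_D\mid Q\mid q-1$, while $F=(C_D,N-1)=\prod_{i=1}^r p_i^{\min(\alpha_i,m_i)}$ is built entirely from the $p_i$ and divides $(p-1,q-1)$, hence also $[p-1,q-1]$. Because $D_{j_0}$ and $F$ are coprime, this gives $D_{j_0}F\mid[p-1,q-1]$, and the unnumbered lemma preceding Corollary \ref{cor:biggerd} yields $(D_{j_0}F,N-1)=(D_{j_0}F,(p-1,q-1))=F$. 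Applying Corollary \ref{cor:biggerd} to $D_{j_0}F$ then delivers $D_{j_0}F^2\mid\varphi(N)$.

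The second step exploits the identity
\[
N-1=(p-1)(q-1)+(p-1)+(q-1).
\]
Writing $\varphi(N)=D'X$ for a positive integer $X$, this becomes
\[
\frac{N-1}{D'}=X+\frac{(p-1)+(q-1)}{D'}.
\]
The hypothesis $D_{j_0}>\bigl((p-1)+(q-1)\bigr)/F^2$ is exactly $D'>p+q-2$, so the fractional term lies in $[0,1)$ and $X=\lfloor(N-1)/D'\rfloor$. Once $X$ is read off, $\varphi(N)=D'X$, then $p+q=N+1-\varphi(N)$, and $p,q$ are the roots of $t^2-(p+q)t+N=0$.

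Algorithmically, Corollary \ref{cor:pq} produces $D_{j_0}$ as the stable value of the iteration $D_{j+1}=D_j/(D_j,N-1)$, which terminates in $O(\log N)$ gcd computations; then $C_D=D/D_{j_0}$ and $F=(C_D,N-1)$ cost one division and one gcd more. I do not anticipate a serious obstacle: the only delicate point is to use the identity $N-1=\varphi(N)+(p+q-2)$ rather than $N+1=\varphi(N)+(p+q)$ of Theorem \ref{th:squareroot}, which is exactly what makes the bound $D'>p+q-2$ sharp enough to pin down $X$.
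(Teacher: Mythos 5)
Your proposal is correct and takes essentially the same route as the paper: produce a divisor $D'$ of $\varphi(N)$ exceeding $(p-1)+(q-1)$ and recover the cofactor $X$ as $\left\lfloor\frac{N-1}{D'}\right\rfloor$, exactly the mechanism of Theorem \ref{th:squareroot}. The only (harmless) difference is your choice $D'=D_{j_0}F^2$, whose divisibility of $\varphi(N)$ you justify through Corollary \ref{cor:biggerd} applied to $D_{j_0}F$, whereas the paper divides by $DF$, which divides $\varphi(N)=[p-1,q-1](p-1,q-1)$ immediately, and brings in the hypothesis via $DF=C_DD_{j_0}F\ge D_{j_0}F^2>(p-1)+(q-1)$.
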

\begin{proof}
The proof is the same as in Theorem \ref{th:squareroot}. Note that 
$$
N-1-(p-1+q-1)=\varphi(N)=[p-1,q-1](p-1,q-1)
$$
and since $D|[p-1,q-1]$, and $F|(p-1,q-1)$, we have
$$
N-1-(p-1+q-1)=DFX,
$$
for some integer $X$. Dividing by $DF$ we get 
$$
\frac{N-1}{DF}=X+\frac{p-1+q-1}{DF},
$$
but, by hypothesis,
$$
DF=D_{j_0}C_DF\ge D_{j_0}F^2>(p-1+q-1)
$$
and, in particular $X=\left[\frac{N-1}{DF}\right]$. Once we have $X$, we have $\varphi(N)$ and we can factor $N$.
\end{proof}

We are now ready to prove the main theorem. We need the following:

\begin{definition}   Let $O$ be the  random  oracle defined in the following way:  each time we  call $O$, it selects  uniformly at random a pair   from the set $C=\{(a,d): a\in(\Z/N\Z)^* ,d|[p-1,q-1], \text{ord}_N(a)=d\}$ , and returns $d$.
\end{definition}
\begin{theorem} \label{th:principal}  Let $\epsilon>0$, $K$ and explicit constant depending on $\epsilon$ and $L>K$ a positive integer. Let  $p<q$ be prime numbers of $L$ bits  and $N=pq$. We can compute $\varphi(N)$ and factor $N$ in polynomial time with probability at least $1-\frac{1}{N^{1/2-\epsilon}}$,  with just one call to $O$. 
\end{theorem}
\begin{proof} Let $x$ be the integer returned by $O$. Note that if we denote $P(x)$ the probability that the oracle returns $x$
then, since the oracle selects $a$ and $d$ uniformly at random, we have $P(x)=\frac{N(x)}{\varphi(N)}$.

\

 If $x\ge 4N^{\frac12}$, recalling that  $x|[p-1,q-1]|\varphi(N)$, we just apply Theorem \ref{th:squareroot} to get the factorization. So we suppose $x< 4N^{1/2}$. If, moreover,  $N(x)<4N^{1/2}$, $O$ will select $x$ with probability smaller than $\frac{16}{N^{1/2}}$, by noticing that 
 $$
 \varphi(N)=N-(p+q)+1\ge N-4\sqrt  N>CN,
 $$
for  $C=16$  if  $N> 30$, since then we have $N-4\sqrt  N>N/4$.  Adding in all possible $x|[p-1,q-1]$ we get 
$$
\sum_{x|[p-1,q-1]}\frac{N(x)}{\varphi(N)}<\frac{16}{N^{1/2}}\tau([p-1,q-1]),
$$
where $\tau(\cdot)$ denotes the divisor function. Using Theorem 4 of \cite{jorge}, we see that 
$$
\tau([p-1,q-1])<[p-1,q-1]^{3/\log\log[[p-1,q-1]]}<N^{3/\log\log N},
$$
since $[p-1,q-1]<\varphi(N)<N$, which gives the stated result for  $K=\frac{e^{\frac{3}{\epsilon}}}{\log 2}$.

\

Otherwise we may assume $N(x)\ge 4\sqrt{N}$. In this case we consider the integer  
$x(x,N-1)$ which is a divisor of $\varphi(N)$ by Corollary \ref{cor:biggerd}, and by Lemma \ref{lem:ord}, 
\begin{eqnarray*}
4\sqrt N&\le& N(x)=\varphi(x)(x,N-1)\sum_{\substack{d|(x,N-1)\\ \left(d,\frac{x}{(x,N-1)}\right)=1}}\frac{\mu^2(d)}{d}\\
&\le& x(x,N-1)\prod_{p|(x,N-1)}\left(1-\frac{1}{p^2}\right)<x(x,N-1),
\end{eqnarray*}
where we have used 
$$
\varphi(x)=x\prod_{p|x}\left(1-\frac1p\right)\le x\prod_{p|(x,N-1)}\left(1-\frac1p\right),
$$ and 
$$
\sum_{\substack{d|(x,N-1)\\ \left(d,\frac{x}{(x,N-1)}\right)=1}}\frac{\mu^2(d)}{d}\le \sum_{d|(x,N-1)}\frac{\mu^2(d)}{d}=\prod_{p|(x,N-1)}\left(1+\frac1p\right).
$$
So,  we can apply Theorem \ref{th:squareroot} with $D=x(x,N-1)$ to get the factorization. As remarked in Corollary \ref{th:laphi}, the value of $\varphi(N)$ is also determined.
\end{proof}

\section{The Algorithm}\label{sec:alg}
We have already proved in the previous theorem that given an RSA module $N$ which is the product of two primes of the same length, and a number $x$ which is the order modulo $N$ of a random number (as in the output of Shor's quantum order finding algorithm) there is a polynomial time algorithm that computes $\varphi(N)$ which very high probability. Let us put together the different steps of this algorithm, taken from theorems \ref{th:squareroot} and \ref{th:principal}. \\

{\bf Algorithm:} \\
Input: An RSA modulo $N$ which is the product of two primes of the same length, and a positive integer $x$ which is the order modulo $N$ of a random number. \\
Output: $\varphi(N)$. \\
$(1)$ Compute the greatest common divisor $w: = (x, N-1)$\\
$(2)$ Compute the product $D = x \cdot w$ \\
$(3)$ Compute the quotient $X$ of the division $(N+1) / D$ \\
$(4)$ Compute the product $X \cdot D$. \\

As proved in theorems \ref{th:squareroot} and \ref{th:principal}, the result of step $(4)$ gives the value of $\varphi(N)$ with very high probability. The algorithm that we propose consists thus in the computation of a greatest common divisor, two products, and a computation of a quotient in a division with reminder. After the proof of Harvey  and van der Hoeven in \cite{multi} of the $1971$ conjecture of Schönhage and Strassen  that the product of two $n$-bit integers can be computed in quasilinear time, namely $O(n\log n$) bit operations,  the complexity of the algorithm corresponds to the computation of the greatest common divisor which  is of order $O(\log N \cdot (\log \log N)^a)$, where we can take as exponent any real number $a$ greater than $2$ (see \cite{mo}). 

\

{\bf Acknowledgements.} We would like to thank the anonymous referees for pointing out some mistakes contained  in a previous version, and for  their valuable comments improving the final presentation of the paper. 
\bibliography{factoring}

\bibliographystyle{plain}
\end{document}